\newcommand{\algrdRat}{RatLU}
\newcommand{\algrdDet}{PrecDetLU}
\newcommand{\algrdMat}{PrecMatLU}
\newcommand{\algrdDixon}{PrecMatDixon}
\newtheorem{theorem}{Theorem}[section]
\newtheorem{lemma}[theorem]{Lemma}
\title{Towards an exact adaptive algorithm for the determinant of a rational matrix}
\author{Anna Urba\'nska\\
Laboratoire Jean Kuntzmann\\
Universit\'e Joseph Fourier, Grenoble I\\
\texttt{E.mail: Anna.Urbanska@imag.fr}\\
}
\date{}
\DeclareMathOperator{\mmod}{\operatorname{mod}}
\DeclareMathOperator{\diag}{\operatorname{diag}}
\DeclareMathOperator{\EEA}{\operatorname{EEA}}
\begin{document}

\maketitle

\begin{abstract}
In this paper we propose several strategies for the exact computation
of the determinant of a rational matrix. First, we use the Chinese
Remaindering Theorem and the rational reconstruction to recover the
rational determinant from its modular images. Then we show a
preconditioning for the determinant which allows us to skip the
rational reconstruction process and reconstruct an integer result. We
compare those approaches with matrix preconditioning which allow us to
treat integer instead of rational matrices. This allows us to introduce
integer determinant algorithms to the rational determinant problem. In particular, we discuss the applicability of the adaptive determinant algorithm of \cite{jgd:2006:det} and compare it with the integer Chinese Remaindering scheme.
We present an analysis of the complexity of the strategies and
evaluate their experimental performance on numerous examples. This
experience allows us to develop an adaptive strategy which would
choose the best solution at the run time, depending on matrix
properties.
 All strategies have been implemented in LinBox linear algebra
library.
\end{abstract}

\section{Introduction}
The determinant computation is one of the core problems in linear algebra.
To our knowledge, the problem of the exact computation of the
determinant of a rational matrix (i.e a matrix with rational
entries) has not yet been widely studied. In general, exact
algorithms can be used everywhere where large precision is required.
For example, the determinant can be too close to 0 or $\pm \infty$
and thus cannot be computed by floating point precision algorithms.
In the case of ill-conditioned matrices symbolic methods can be
preferred as rounding errors can spoil the computation. It can also
be interesting to compare the use of decimal and continued fractions
approximations of the entries of real-valued matrices. Continued fractions are
the best approximants with small denominators, see \cite[Ch.
4]{gathen}. In this paper, we will try to face the question of how
efficient an exact determinant computation can be in both cases.

LinBox library \cite{Dumas2002Linbox} implements exact algorithms for the determinant computation in the case of modular and integer domains. By using fast modular routines \cite{Dumas2002issac,Dumas2004} it can offer solutions an order of magnitude faster than other existing implementations \cite{jgd:2006:det}. We apply these procedures to the computation of the determinant of a rational matrix.

Rational field arithmetics is implemented in GMP \cite{GMP} and
Givaro \cite{Givaro} libraries. In general, rational numbers are
difficult to treat from the exact computation point of view. Mainly,
the size of the numerator and denominator can increase very quickly
with every addition and multiplication. When we add or multiply two
fractions with numerators and denominators bounded by $M$, the
numerator and denominator of the result are bounded by $O(M^2)$.
Moreover, one addition requires 3, and one multiplication requires 2
integer products, as well as a $\gcd$ computation. Therefore, the
cost of an exact matrix-vector or matrix-matrix product can be
prohibitive in practice.
This prohibits the use of the rational field $\mathbb{Q}$ in most exact linear algebra algorithms which rely on matrix-matrix or matrix-vector products.

However, the cost of computing a modular image of a rational number $\frac{a}{b}$, where $a, b$ are of moderate size, should be comparable with the cost of computing a modular image of a large integer number. This allows us to compute a modular image of a rational matrix at a reasonable cost and thus enables us to use modular procedures. 

To compute the determinant of a rational matrix
$A=\big[\frac{a_{ij}}{b_{ij}}\big], b_{ij}>0$ the problem of matrix
storage has to be considered. First, we can store the entries of $A$
as rational numbers. Furthermore, one could store the common
denominator $D(A)$ of all entries of $A$ and an integer matrix $A'$
given by the formula $A=\frac{1}{D(A)}A'$. This approach can be
useful in the case when the entries of $A$ are decimal fractions and
$D(A)$ can be set to a power of $10$. But if we only assume that the
values $|a_{ij}|, b_{ij}$ are less than $M$, both $D(A)$ and
$\|A'\|$ are bounded by $O(M^{n^2})$. Still, we may store the common
denominator for each row (column) separately. Then the integer
vectors $\tilde{A}_i$ are given by the equation
$A_i=\frac{1}{D_i}\tilde{A}_i$, where $A_i$ is the matrix row
(column) and $D_i$ is the common denominator of its entries. Vectors
$\tilde{A}_i$ form matrix $\tilde{A}$, the norm of which is bounded
by $O(M^{n})$. The product $\pi D_i$ gives a more accurate
approximation of the denominator $D(\det(A))$ than $D(A)^n$.

The purpose of this paper is to propose the strategies to compute the
denominator of a rational matrix. All approaches are based on modular
computation. Depending on the matrix storage determinant and/or matrix
preconditioning is proposed. The resulting algorithms can use the
rational reconstruction \cite[Ch.5]{gathen} and/or existing integer determinant algorithms.

The rest of the paper is organized as follows. In section \ref{sec:ea} we give a short description of the existing algorithms for the rational reconstruction and the integer determinant problem. In section \ref{sec:rda} we present the main result i.e. two preconditioning strategies and four new algorithms to compute the rational determinant. The cost of the algorithms can be described in terms of the number of modular images of $A$ and modular determinant computations needed. Depending on the strategy, the cost of the rational reconstruction or $p$-adic lifting is taken into account. In section \ref{sec:cpl} we discuss the cost of computing a modular image of a matrix and the overall cost of the algorithms. In section \ref{sec:exp} we present the experimental results and discuss the best choice of the strategy in practice. We conclude the paper by proposing some mixed solutions in section \ref{sec:ccl}.

\section{Existing Algorithms}\label{sec:ea}

The aim of this section is to introduce the algorithms that will be
used later in section \ref{sec:rda}. In subsection \ref{sec:rr} we
give a short description of the rational reconstruction procedure.
On the example of $p$-adic system solving \cite{Dixon1982}, we
present the application of this procedure to the computation of a
rational solution. We show how to change the procedure in the case
of early terminated reconstruction \cite{Wang} and give the
complexity estimation in this case. Then in subsection \ref{sec:ida}
we present the classical CRA algorithm for the determinant and its
modifications by \cite{Abbott1999} and \cite{jgd:2006:det}.


\subsection{Rational reconstruction and its application}\label{sec:rr}

A modular image of a rational number $\frac{a}{b}$ mod $M$ can be
computed by taking the modular images of $a$ and $b$ and applying
the modular division. This fact can be written as
\begin{equation*}
\frac{a}{b} = u  \mmod M \Leftrightarrow a=bu \mmod  M.
\end{equation*}
It should be noticed is that the opposite procedure can also be
performed. One can reconstruct the fraction $\frac{a}{b}$ where
$\gcd(a,b)=1, b>0$ from it modular image $u$. The solution is
usually not unique but when we additionally require that $|a| <
\frac{N}{2}$, $b \leq \frac{M}{N}$, then there exists at most one
solution, see \cite[Ch.5]{gathen}.


The solution to the rational reconstruction problem can be computed by
applying the extended Euclidean algorithm EEA which searches for the
$\gcd$ of $M$ and $u$. The procedure Ratrec(a,b,u,M, N, D) takes as the input modulus $M$, $u\in \mathbb{Z}$ and the bounds $N$ and $D$, and returns a fraction $\frac{a}{b}=u \mmod M$ such that $|a| <N, b < D$ or FAIL if no such solution exists.
The worst case complexity of {\em Ratrec} is thus the same as for the
EEA algorithm i.e. $O\left(\log^2(M)\right)$ for the classical
algorithm and $O(\log(M)\log(\log(M)))$ for the fast Euclidean
algorithm, see \cite[Ch.11]{gathen}. We will use the notation $\EEA(M)$ for the complexity of the Extended Euclidean Algorithm with entries bounded by $M$.

In many application, the cost of rational reconstruction is usually
small compared with the cost of computing $u$ and $M$. The general scheme is to recursively compute $u_k, M_k$, where $M_k=p_1p_2\cdots p_k$ or $M_k =p^k$ until $M_k > 2ND$ and then to apply the rational reconstruction. The complexity of the procedure depends on the number $k$ of steps, which can be quite large
Reducing the number of steps can be the easiest way to enhance the
performance of the algorithm.

This can be seen on the example of the Dixon algorithm
\cite{Dixon1982} to solve a linear system $Ax=b$ of integer
equations. Let $N,D$ be the bound for the numerator and denominator
of $x$. In the classical approach we compute the $p$-adic
approximation in $k > \log(N)+ \log(D)+1 $ steps and then
reconstruct the result, which gives the complexity
$O\left(m^3(\log(m)+\log(\|A\|))^2\right)$ when we use the bound of
Hadamard for $D,N$ and assume $b\in O(1)$.
See \cite{Mulders1999} for a detailed complexity study. In fact, the number of entries in $x$ which we need to reconstruct can often be reduced, see \cite{Dumas2002Linbox}.


One should however notice, that the bounds $N$ and $D$ can be much
bigger than the actual result.
The idea is therefore to apply the rational reconstruction
periodically and check the solution for correctness.
If $M_k=p^k$ is the modulus in the current step, the method of Wang
\cite{Wang} prompts us to set $\sqrt{\frac{M_k}{2}}$ as the current
bound for numerators and the denominator in {\em Ratrec}. The
algorithm is guaranteed to return the result if $M_k > 2\max(N(x)^2,
D(x)^2)$, where $N(x),D(x)$ are the values of the numerator and the
denominator. In the opposite case,
$Ratrec(a,b,u,M_k,\sqrt{\frac{M_k}{2}},\sqrt{\frac{M_k}{2}})$ should
fail with large probability.
If we apply Wang's idea to the $p$-adic lifting we can reduce the
number of steps to $k=2\log_p(\max(N(x),D(x)))+1$ and the complexity becomes $O\left(m^{\omega} + m^2k\log(m\|A\|)\right)$
Current work on this field focus on further reducing the number of steps in the case when $N(x) \ll D(x)$ or $D(x) \ll N(x)$. A purely heuristic idea is to use the bounds $\sqrt{\frac{M_k}{2}\frac{N}{D}}$, $\sqrt{\frac{M_k}{2}\frac{D}{N}}$ instead of $\sqrt{\frac{M_k}{2}}$. For other approaches, see \cite{kho-mon:2006,ol-sto:2006}.


\subsection{Integer Determinant Algorithms}\label{sec:ida}

For an integer matrix $A$ one has several alternatives to compute the
determinant.
The classical approach is to use Chinese Remaindering Algorithm (CRA)
to reconstruct the value from sufficiently many modular images.
The modular determinant is computed  by LU factorization in the time
$O(n^\omega)$, where $n$ is the matrix dimension.
Each step of the algorithm consist of computation mod $p_i$ and a
reconstruction of the determinant mod $p_1\cdots p_i$ by the Chinese
Remaindering Theorem. The computation is stopped when the early
termination (ET) condition is fulfilled i.e. the reconstructed
result rests the same for several iterations. The algorithm is Monte
Carlo type, where the probability of success is controlled by the
number of repetitions. See \cite{Dumas2001, jgd:2006:det} for a
detailed description.


A mixture of CRA loop and Dixon $p$-adic lifting is used to compute the integer determinant in  \cite{Abbott1999} and in the hybrid algorithm of \cite{jgd:2006:det}. The principle is to reduce the value reconstructed by CRA algorithm by computing a large fraction of the determinant. By solving several linear systems we can compute some largest invariant factors $s_{m},\dots s_{m-i}$. Their product $\pi$ is potentially a large part of the determinant. An early terminated CRA loop which reconstructs $\det(A)/\pi$ mod $p_0p_1\cdots p_i$ usually requires only a few modular determinant computations. Informally, the algorithm can be described as follows.
\begin{enumerate}
\item For $i=0$ to $k$ do
\begin{enumerate}
\item Solve $Ax_i = b_i$ by Dixon $p$-adic lifting to find $s_{m},\dots s_{m-i}$;
\item $\pi = s_{m}\cdots s_{m-i}$;
\item Run CRA for several iterations to determine $\det(A)/\pi$;
\item if ET break;
\end{enumerate}
\item Run another determinant algorithm to get the result;
\end{enumerate}
Here, $k$ should not exceed the expected number of invariant factors which is $O(\sqrt{\log(n)})$ see \cite{jgd:2006:det}.
The expected complexity of the hybrid determinant algorithm
\cite{jgd:2006:det} for random dense matrices is $O\left(n^3
\log^{2.5}(n\|A\|) \right)$. In the worst case (step 2) we can
choose between the CRA algorithm and the algorithms of
\cite{Storjohann2004, Eberly2000, Kaltofen2005}. In fact, in the
expected case we do not need to run this step.
The experiment proved that thanks the adaptive solutions this algorithm performs better than other implementation for a larger group of matrices.

\section{Rational Determinant Algorithms}\label{sec:rda}


The algorithms to compute the rational determinant are based on the
ideas described in section \ref{sec:ea}. We present four main
strategies to compute the rational determinant. They all use CRA
which allows us to compute the determinant of the matrix modulo a
product $p_1\cdots p_k$ of primes. Then the first variant uses the
rational reconstruction to obtain the rational result. In order to
make use of Early Termination condition we have to precondition the
determinant to obtain its integer multiplication. Preconditioning of
the matrix allows us to use the integers determinant algorithms. The
application of two determinant algorithms is studied here. The
common requirements for all algorithm are shown in \ref{alg:req}.
The algorithms are Monte Carlo type due to the early termination
used.


\floatname{algorithm}{Requirements}
\setcounter{algorithm}{-1}
\begin{algorithm}\caption{}\label{alg:req}
\begin{algorithmic}
\REQUIRE $A$ - an $m\times m$ rational matrix; \REQUIRE $D_i$,
$i=1\dots m$ - the common denominator of the entries of the $i$th
row (column); \REQUIRE $N,D$ - the bounds for the numerator and the
denominator of $\det(A)$, $D=\pi D_i$;
\REQUIRE A set $P$ of random primes;
\REQUIRE $Ratrec(a,b,u,M, N, D)$ - a procedure which reconstructs $\frac{a}{b}=u\mmod M, a <N, b < D$ or returns FAIL.
\ENSURE $\det(A)$ - the determinant of the matrix.
\end{algorithmic}
\end{algorithm}

\floatname{algorithm}{Algorithm} The effectiveness of our methods
depends heavily on the number of modular determinants computed and
thus on the bound $N$ and $D$ for the numerator and the denominator
of the determinant. One can compute $D$ as the product of lcm of all
denominators in a row (or a column). Then $N$ can be computed as
$D\cdot H$, where $H$ is the Hadamard bound for matrix $A$. One
should notice that the bounds can be largely overestimated. Thus, we
proposed output-dependant approach which allows us to reduce the
number of iteration.

The first idea is to employ the CRA scheme and compute the
determinant for the modular images of a rational matrix. In the case
when the determinant is rational, early termination condition never
holds. Instead, we have to compute the bounds $D$ and $N$ for the
denominator and numerator of the determinant.
As soon as the product of primes $M=p_1\cdots p_k$ overcomes $2ND$ we can apply
rational reconstruction and reconstruct the determinant from the
modular image. We can also use an output dependent rational
reconstruction as described in section \ref{sec:rr}. This strategy is
presented as algorithm \algrdRat.  An early termination in the
rational case would required applying the rational reconstruction from
time to time with the bounds $N=D=\sqrt{\frac{M_k}{2}}$ and wait for the result
to re-occur. This leads to solution when $M > 2\max\{n^2, d^2\}$,
where $n,d$ are the numerator and denominator of the determinant.

\begin{algorithm}\caption{\algrdRat}
\begin{algorithmic}[1]
\STATE $i=0, k=0, n=0, d=1, M = 1, u=0;$
\REPEAT
\STATE ++$i$; Get $p_i$ from $P$;
\STATE Compute $A_i = A \mmod p_i$;
\STATE Compute $u_i = \det(A_i)$;
\STATE Reconstruct $u= \det(A) \mmod M p_i$ using $M, u, u_i, p_i$, $M = M p_i$;
\IF {$i =k^2$}
    \STATE s = Ratrec($n,d,u,M, \sqrt{\frac{M}{2}},\sqrt{\frac{M}{2}}$);++$k$;
    \IFTHENEND{s $\neq$ FAIL}{return $n,d$;}
\ENDIF
\UNTIL{$M > 2ND$}
    \STATE status = Ratrec($n,d,u,M,N,D$);
    \IFTHENEND{status $\neq$ FAIL}{return $n,d$;}
\end{algorithmic}
\end{algorithm}

The second method can use the denominator bound~$D$ to make the
CRA loop look for an integer value. Again, we compute the modular
image of a rational matrix $A$ but this time we call CRA to
look for $D\times\det(A)$ which is integer. Now the classic ET
condition can be used and the result is obtained as soon
as $M > n \frac{D}{d}$. The effectiveness of this method depends
therefore on the exactness of denominator bound $D$. Experimental
results show that it is sufficient in practice, see sec. \ref{sec:exp}
table \ref{tab:app}. This strategy is presented as algorithm \algrdDet.

\begin{algorithm}\caption{\algrdDet}
\begin{algorithmic}[1]
\STATE $i=0;M=1; u=0$; \REPEAT \STATE ++$i$;Get $p_i$ from $P$;
\STATE Compute $A_i = A \mmod p_i$; \STATE Compute $u_i =
D\cdot\det(A_i)$; \STATE reconstruct $u= D\cdot\det(A) \mmod M 
p_i$ using $M, u, u_i, p_i$, $M = M \cdot p_i$ \IFTHENEND {ET
holds}{return $\frac{u}{\gcd(u,D)},\frac{D}{\gcd(u,D)}$;} \UNTIL{$M
> 2ND$} \STATE return $\frac{u}{\gcd(u,D)},\frac{D}{\gcd(u,D)}$;
\end{algorithmic}
\end{algorithm}

The last two strategies require an integer matrix $\tilde{A}$ which can be obtained by preconditioning the rational matrix $A$. In order to obtain an integer matrix, the easiest way would be to take matrix $A'=D(A)A$, where $D(A)$ is the common denominator of all entries. In the general case, where the entries of $A$ are fractions $\frac{a_{ij}}{b_{ij}}$ with numerator and denominator bounded by $\|A\|$, this is not the best choice as the size of $D(A)$ can be as large as $O(\|A\|^{m^2})$. This causes $\log(\|A'\|)$ to be $O(m^2)$. Moreover, the denominator approximation is $D(A)^m$ in this case, which is $O(m^3)$ in size. We have already defined a tighter bound for the denominator of $\det(A)$ by $\pi D_i$, which is $O(m^2)$ in size. Now, if we want to use the integer matrix $\tilde{A}$ then we can precondition $A$ by taking $\tilde{A}=A\diag(D_i)$, where $D_i$ are the common denominators of the rows (or $\tilde{A}=\diag(D_i)A$, where $D_i$ are the common denominators of the columns).
For the preconditioned matrix $\tilde{A}$ all integer determinant
algorithms can be applied. In particular the hybrid determinant
algorithm of \cite{jgd:2006:det} can be used. The drawback of this
approach is the size of the coefficients of $\tilde{A}$ compared to
$A$, see section \ref{sec:exp} table \ref{tab:him}. This forced us
to use early terminated rational reconstruction for system solving
in the Dixon $p$-adic lifting algorithm. The strategies that use the
CRA loop or the hybrid algorithm are presented as algorithms
\algrdMat~ and \algrdDixon~ respectively.

\begin{algorithm}\caption{\algrdMat}
\begin{algorithmic}[1]
\STATE $i=0;M=1; u=0;$ \STATE Compute $A=A\diag(D_i)$ (or
$\diag(D_i)A)$ \REPEAT \STATE Get $p_i$ from $P$; \STATE Compute
$A_i = A \mmod p_i$; \STATE Compute $u_i = \det(A_i)$; \STATE
reconstruct $u= \det(A) \mmod M p_i$ using $M, u, u_i, p_i$, $M
= M\cdot p_i$ \IFTHENEND {ET holds}{ return
$\frac{u}{\gcd(u,D)},\frac{D}{\gcd(u,D)}$;} \UNTIL{$M > 2ND$} \STATE
return $\frac{u}{\gcd(u,D)},\frac{D}{\gcd(u,D)}$;
\end{algorithmic}
\end{algorithm}

\begin{algorithm}\caption{\algrdDixon}
\begin{algorithmic}[1]
\STATE Compute $A=A\diag(D_i)$ (or $\diag(D_i)A)$;
\STATE Compute $u=\det(A)$ by HybridDet \cite{jgd:2006:det};
\STATE return $\frac{u}{\gcd(u,D)},\frac{D}{\gcd(u,D)}$;
\end{algorithmic}
\end{algorithm}









\section{Complexity Analysis}\label{sec:cpl}

In this section we study the complexity of the algorithms presented in section \ref{sec:rda}. In subsection \ref{sec:gen} we present the analysis of the general case, where we assume that the entries of the matrix are fractions with numerators and denominators bounded by $\|A\|$. Then, in subsection \ref{sec:spc}, we will focus on two special cases i.e. matrices of decimal fractions and Hilbert matrices.


The complexity of the strategies described in section \ref{sec:rda}
depends on the number of iterations required by the {\bf while}
loop of CRA. Then, depending on the strategy, we have to include the
cost of computing the homomorphic image of the matrix, the cost of the rational reconstruction or the cost of $p$-adic lifting. If we use the early termination condition, the number of steps required for the computation of $\det(A)$ depends on the values: $m$ - the size of the matrix, $n,d$ - the real values of the numerator and denominator of $\det(A)$ and $D$ - the bound for the denominator. The cost of homomorphic imaging depends on the maximum norm of the matrix i.e. $\|A\|=\max\{\|a_{ij}\|,b_{ij}\}$ and $\|\tilde{A}\|$.

\subsection{General case}\label{sec:gen}
We start this section by the analysis of the  rational homomorphic imaging schemes. We have the following lemma.

\begin{lemma}\label{lem:him}
Let $p$ be a word-size prime. Then the complexity of computing the
modu\-lar image at $p$ for a rational matrix $A$ is
$O(m^2(\log(\|A\|)) + \EEA(p))$ word operations.
\end{lemma}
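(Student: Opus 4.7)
The plan is to bound the cost entry by entry under the matrix storage model where every coefficient is kept as a reduced fraction $a_{ij}/b_{ij}$ with $|a_{ij}|, b_{ij} \leq \|A\|$. Computing $A \bmod p$ then amounts to producing the $m^2$ modular images $(a_{ij}/b_{ij}) \bmod p$ independently, which I would do in three sub-steps per entry.

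First, reduce the numerator $a_{ij}$ and the denominator $b_{ij}$ modulo $p$ by classical long division. Since each operand has bit-length $O(\log \|A\|)$ and $p$ fits in one machine word, each reduction costs $O(\log \|A\|)$ word operations. Next, invert $\overline{b_{ij}} \bmod p$ via the Extended Euclidean Algorithm, at cost $\EEA(p)$ by assumption. Finally, a single word-sized modular multiplication produces $\overline{a_{ij}}\cdot \overline{b_{ij}}^{-1} \bmod p$ in $O(1)$. Summing over the $m^2$ entries yields the claimed bound.

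The only point requiring care, and the one I would flag as the main obstacle to writing the proof cleanly, is the justification of the $O(\log \|A\|)$ figure for the reductions. This depends crucially on the entry-wise storage of $A$: if $A$ were instead represented as $\frac{1}{D(A)}A'$ with a single common integer denominator, the integer numerators in $A'$ would have bit-length $\Theta(m^2 \log \|A\|)$, and the per-entry reduction cost would blow up. The row-wise variant $A_i = \frac{1}{D_i}\tilde{A}_i$ is only marginally better, since $D_i$ can be as large as $\|A\|^m$. So the lemma is really a statement about the fractional storage: keeping each coefficient as a reduced pair $(a_{ij},b_{ij})$ of word-sized length $O(\log \|A\|)$ is what makes the reductions cheap and the EEA work per entry bounded. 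Modulo this bookkeeping the proof is immediate.
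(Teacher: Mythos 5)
Your proof is correct and follows essentially the same route as the paper's: reduce each of the $m^2$ numerators and denominators modulo $p$ at cost $O(\log\|A\|)$ each, invert the reduced denominator by the extended Euclidean algorithm at cost $\EEA(p)$, and multiply. Your added remark that the bound hinges on entry-wise fractional storage (rather than a common-denominator representation) is a sound observation consistent with the paper's earlier discussion, though not part of its proof.
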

\begin{proof}
For a matrix without a pattern we compute an image for all $m^2$
entries. For a rational fraction the cost is $O(\log(\|A\|))$ for the
computation of the modular image of the numerator and denominator and
$\EEA(p)=O(\log(p)\log(\log(p)))$ for the modular inverse computation by fast
extended Euclidean algorithm. Therefore for a word-size $\|A\|$ the cost of
computing the image is $O(1)$ yet important, due to the constant for
computing the inverse of an element mod $p$.
\end{proof}

For the integer case, the cost is $\log(\|\tilde{A}\|))$.
We can notice that $\log(\|\tilde{A}\|)$ can be $O(m\log(\|A\|))$ in the worst case, so the complexity of homomorphic imaging in terms of $m$ is $O(m^2)$ for the rational and $O(m^3)$ in the integer case. But if $\|\tilde{A}\| < p$ the cost of imaging for one element is $1$. 
Thus, if both $\|\tilde{A}\|$ and $\|A\|$ are less than $p$, the complexity of the homomorphic imaging becomes $m^2\EEA(p)$ for the rational and $m^2$ for the integer case. In this case, it is better to use integer imaging. On the other hand, if matrix $A$ is structured, for example it is Hankel-type, we have the complexity $m\EEA(p)$ for rational imaging. Due to the preconditioning, we loose the structure pattern for $\tilde{A}$ and the complexity of integer imaging rests without change. Finally we notice, that for sparse matrices with $\Omega$ elements, we can take $\Omega$ instead of $m^2$ in the complexity formula.

Putting it together we have the following theorem.

\begin{theorem}\label{thm:com}
The worst case complexities of the strategies for computing the determinant of a rational matrix $A$ of size $m$ are
\begin{enumerate}[1.]
\item $O\left(k(m^2\log(\|A\|) + m^{\omega})\right) + O^{\sim}\left(k\sqrt{k}\right) $
    for \algrdRat, where $O^{\sim}$ hides some $\log(k)$ factors;
\item $O\left(\log(\frac{D}{d}n)(m^2\log(\|A\|) + m^{\omega})\right)$
for \algrdDet;
\item $O\left(\log(\frac{D}{d}n)(m^2\log(\|\tilde{A}\|) + m^{\omega})\right)$
for \algrdMat;
\item $O^{\sim}(x(m^2(\log(m)+\log(\|\tilde{A}\|)) +
mx^{\frac{1}{2}})+O(\log(\frac{D}{d}\frac{n}{s_m}+1)(m^2\log(\|\tilde{A}\|) +
m^{\omega}))$ for \algrdDixon, where $s_m=s_m(\tilde{A})$ and $x\in m(\log(m\|\tilde{A}\|\|b\|)$ is
the size of solution to $\tilde{A}x=b$.
\end{enumerate}
Here $\tilde{A}$ is equal to $A\diag(D_i)$ as in  section
\ref{sec:rda}; $n$, $d$ are the numerator and denominator of
$\det(A)$ and $k=O(\max(\log(n), \log(d)))$.
\end{theorem}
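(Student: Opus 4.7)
The plan is to decompose the analysis of each algorithm into three ingredients: the number of CRA (or $p$-adic lifting) iterations required, the per-iteration cost of producing a modular image of the matrix (given by Lemma \ref{lem:him}) plus the cost of a modular determinant via LU ($O(m^\omega)$), and any additional cost of rational reconstruction or $p$-adic lifting. I would then treat each of the four algorithms in turn, reusing the iteration-count analysis from section \ref{sec:rda}.

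For \algrdRat, the main loop terminates either when $M > 2ND$ or when an intermediate call to \emph{Ratrec} at iteration $i=k^2$ succeeds. By Wang's criterion recalled in section \ref{sec:rr}, the latter is guaranteed once $M > 2\max(n^2,d^2)$, which fixes the iteration count at $k = O(\max(\log(n),\log(d)))$. Each iteration incurs $O(m^2\log(\|A\|) + \EEA(p))$ for the rational image (by Lemma \ref{lem:him}) and $O(m^\omega)$ for the modular LU; summing over $k$ iterations yields the first summand $O(k(m^2\log(\|A\|)+m^\omega))$. The periodic reconstructions are triggered at iterations $1,4,9,\dots$, i.e. $O(\sqrt{k})$ times, each at cost $\EEA(M) = O^{\sim}(\log(M)) = O^{\sim}(k)$, giving the second summand $O^{\sim}(k\sqrt{k})$.

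For \algrdDet, the CRA loop searches for the integer $D\det(A)$, whose magnitude is bounded by $nD/d$, so the classical early-termination condition halts the loop after $O(\log(\tfrac{D}{d}n))$ iterations; the per-iteration cost is the same as in the rational case (the multiplication by $D$ is absorbed into the modular reduction), which gives the stated bound. Algorithm \algrdMat is analogous, but the loop now operates on the integer matrix $\tilde{A} = A\diag(D_i)$, so Lemma \ref{lem:him} contributes $O(m^2\log(\|\tilde{A}\|))$ per iteration; the iteration count is unchanged because $|\det(\tilde{A})| = |D\det(A)|$ has essentially the same size.

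Algorithm \algrdDixon is the most delicate and is where I expect the main obstacle to lie. I would invoke the expected behavior of the hybrid integer determinant algorithm of \cite{jgd:2006:det} directly on $\tilde{A}$. Its cost decomposes into a small sequence of Dixon $p$-adic lifts that recover the largest invariant factors $s_m, s_{m-1}, \dots$, each having solution size $x \in O(m\log(m\|\tilde{A}\|\|b\|))$ and cost $O^{\sim}(x(m^2(\log(m)+\log(\|\tilde{A}\|)) + m x^{1/2}))$ when the Wang-style early terminated lifting of section \ref{sec:rr} is used, followed by an early-terminated CRA loop that reconstructs $\det(\tilde{A})/\pi$. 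The tricky point is bounding the length of this final CRA loop: after dividing out $\pi$ one is left with an integer whose magnitude is at most $(D/d)(n/s_m)$ up to constants, so the loop runs for $O(\log(\tfrac{D}{d}\tfrac{n}{s_m}+1))$ iterations, each of integer cost $O(m^2\log(\|\tilde{A}\|) + m^\omega)$. Adding the two contributions gives the fourth bound. The cleanest way to justify the residual size after preconditioning is to track the identity $\det(\tilde{A}) = \det(A)\cdot\prod_i D_i$ and to compare the actual denominator $d$ of $\det(A)$ with the bound $D=\prod_i D_i$, so that the quotient $D/d$ measures precisely the overestimation that survives into the post-lifting CRA loop.
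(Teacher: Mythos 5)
Your proposal is correct and takes essentially the same approach as the paper's own (much terser) proof: both count the CRA iterations per strategy (via Wang's criterion for \algrdRat, via early termination on the integer $n\frac{D}{d}$ for \algrdDet\ and \algrdMat, and via the residual $\frac{D}{d}\frac{n}{s_m}$ after the $p$-adic lifts for \algrdDixon), multiply by the per-iteration imaging and LU costs from Lemma~\ref{lem:him}, and add the $O(\sqrt{k})$ rational reconstructions and the early-terminated lifting cost parameterized by $x$. You simply spell out the details the paper leaves implicit.
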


\begin{proof}

The complexities can be obtained by a careful examination of the number of CRA
steps.
The result for alg. \algrdRat~ takes into account the cost of the rational reconstruction which is
performed at most $O(\sqrt{k})$ times. In alg. \algrdDixon~ we
introduce $x$ to estimate the cost of early terminated $p$-adic
lifting. The size of $x$ can generally vary depending on the choice of $b$ but
is $O(m\log(m\|\tilde{A}\|\|b\|))$ in the worst case. To further
evaluate the worst case complexity of alg. \algrdDixon~ we assumed
that {\em HybridDet} continues to use CRA loop in the worst case. Thus
the number of iterations $O(\log(\frac{D}{d}\frac{n}{s_m}))$ and the complexity.
\end{proof}

Special care should be taken if we consider the use of alg. \algrdDixon.
As $\|\tilde{A}\|$ can potentially be $O^{\sim}(m)$ in size and with a
pessimistic bound on $x$, its worst case complexity can be $O^{\sim}(\log(m^4))$, which is worse than for the CRA computation. Nevertheless, the gain of computing $s_m$  can be important, as it is the case in the {\em HybridDet} algorithm, see \cite{jgd:2006:det}.

\subsection{Complexity in the special cases}\label{sec:spc}
By the precedent remarks it should be visible, that the analysis of the strategies should be divided into two main cases. One would consist of the matrices, whose entries are given by decimal fraction, or more generally, where the common denominator of all entries, the common denominator of the rows and the norm of $A$ are of the same order i.e. $D(A)=O(D_i)=O(\|A\|)$. In the other case matrix entries are given as fractions with different denominators. We will study the complexity of the algorithms on the example of Hilbert matrices.

In the case of matrices of decimal fractions let us further assume that $\|A\|$ is $O(1)$. This would be the case of numerous ill-conditioned matrices emerging from different applications in science and engineering. In order to better describe the differences between the algorithms, we include the cost of EEA when it is relevant. The theorem is a straightforward consequence of theorem \ref{thm:com}.
\begin{theorem}
The complexities of the strategies in the case when $\|A\|=O(\tilde{\|A\|})=O(1)$ are:
\begin{enumerate}
\item $O^{\sim}\left(k(m^2\EEA(p) + m^{\omega} + k\sqrt{k}\right) $
    for alg. \algrdRat;
\item $O^{\sim}\left(\log(\frac{D}{d}n)(m^2\EEA(p)+ m^{\omega})\right)$
for alg. \algrdDet;
\item $O^{\sim}\left(\log(\frac{D}{d}n)(m^2 + m^{\omega})\right)$
for alg. \algrdMat;
\item $O^{\sim}\left(x(m^2\log(m) + mx^{\frac{1}{2}} ))+
\log(\frac{D}{d}\frac{n}{s_m})(m^2 + m^{\omega})\right)$
    for alg. \algrdDixon.
\end{enumerate}
where $k, x$ are as in theorem \ref{thm:com}.
\end{theorem}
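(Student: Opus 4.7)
The plan is to obtain this theorem as a direct specialization of Theorem \ref{thm:com} under the hypothesis $\|A\|=O(\|\tilde A\|)=O(1)$, being careful to keep the $\EEA(p)$ factor visible in the rational-imaging cases, since once $\log(\|A\|)$ disappears the modular-inverse work becomes the dominant per-entry cost.

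First I would revisit Lemma \ref{lem:him}: the per-entry cost of rational homomorphic imaging is $O(\log(\|A\|)+\EEA(p))$, so the total cost of one modular image of $A$ is $O(m^2(\log(\|A\|)+\EEA(p)))$. Under the assumption $\|A\|=O(1)$, the $\log(\|A\|)$ summand is absorbed and this cost becomes $O(m^2\EEA(p))$. For the preconditioned integer matrix $\tilde A$ no EEA is needed per entry, and under $\|\tilde A\|=O(1)$ the per-image cost collapses to $O(m^2)$. These two observations are the only new inputs; the rest is bookkeeping.

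Next I would plug these into each of the four bounds in Theorem \ref{thm:com}. For \algrdRat{} and \algrdDet, which operate directly on the rational matrix, I replace the factor $m^2\log(\|A\|)$ by $m^2\EEA(p)$, leaving the number of CRA iterations (respectively $k$ and $\log(\tfrac{D}{d}n)$) and the $m^\omega$ modular-determinant cost unchanged; the $O^{\sim}(k\sqrt k)$ rational-reconstruction term in \algrdRat{} is unaffected. For \algrdMat, which runs the CRA loop on $\tilde A$, the factor $m^2\log(\|\tilde A\|)$ drops to $m^2$, giving $O^{\sim}(\log(\tfrac{D}{d}n)(m^2+m^\omega))$. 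For \algrdDixon, the hybrid step uses $\tilde A$ throughout: the $p$-adic lifting contribution $O^{\sim}(x(m^2(\log(m)+\log(\|\tilde A\|))+mx^{1/2}))$ becomes $O^{\sim}(x(m^2\log(m)+mx^{1/2}))$, and the final CRA loop on $\det(\tilde A)/(\pi s_m\cdots)$ contributes $\log(\tfrac{D}{d}\tfrac{n}{s_m})(m^2+m^\omega)$; summing gives the stated bound.

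There is no real obstacle here: the only subtle point is resisting the temptation to drop $\EEA(p)$ along with $\log(\|A\|)$ in the rational cases. I would therefore emphasize, by referring back to the discussion following Lemma \ref{lem:him}, that when $\|A\|<p$ the per-entry inversion mod $p$ is precisely what distinguishes rational from integer imaging and must be retained in the $O^{\sim}$ estimate, which is why \algrdMat{} and \algrdDixon{} gain a factor of $\EEA(p)$ over \algrdRat{} and \algrdDet{} in this regime. With that remark in place, the theorem is obtained by the substitutions above.
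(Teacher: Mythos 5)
Your proposal is correct and follows exactly the route the paper intends: the paper gives no explicit proof beyond declaring the theorem ``a straightforward consequence of theorem \ref{thm:com}'' with the EEA cost included where relevant, and your substitutions (retaining $\EEA(p)$ from Lemma \ref{lem:him} for the rational-imaging algorithms while collapsing the integer-imaging cost to $m^2$) are precisely the intended specialization. Your explicit remark on why $\EEA(p)$ must survive the $\|A\|=O(1)$ assumption is a useful elaboration of what the paper leaves implicit.
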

The analysis suggests that the algorithm \algrdMat~ should be better than
\algrdDet~\-(see \ref{lem:him} for the homomorphic image
complexity). The performance analysis in section \ref{sec:exp}
confirms this observation. Furthermore, as long as the Smith form of
$\tilde{A}$ is simple, we encourage the use of strategy \algrdDixon.
In particular, we can establish an equivalence between matrices $A$
of random decimal fractions with $e$ decimal places taken randomly
an uniformly from the interval $[0,1]$ and matrices $\tilde{A}$,
$\|\tilde{A}\| < 10^e$. This allows us to use the expected
complexity of the hybrid algorithms of \cite{jgd:2006:det} as the
expected complexity of the rational determinant computation by alg.
\algrdDixon.
Also,
the preconditioning should be used instead of strategy \algrdRat. For more details see section \ref{sec:exp}.

The other group consists of matrices with rational entries given by
fractions with very different denominators. As a model case we can
consider Hilbert matrices. Hilbert matrices are the matrices of the
form $H_m = [\frac{1}{i+j-1}]_{i,j=1..m}$. They are benchmarks examples for
many numerical methods. The formula for the determinant of a Hilbert matrix is
well known and is given by the equation
$$
 \frac{1}{\det(H_m)}=\Pi_{k=1}^{m-1} (2k+1) \left( \begin {matrix}2k\\k\end{matrix}\right)^2.
$$


\begin{theorem}\label{thm:hil}
The complexities for rational determinant strategies in the case of Hilbert matrices are
\begin{enumerate}
\item $O\left(m^2\log(m)(m^{\omega} + m\sqrt{\log(m)})\right) $
    for alg. \algrdRat;
\item $O\left(m^{\omega+2}\log(m)\right)$
for alg. \algrdDet;
\item $O\left(m^5)\log(m)\right)$
for alg. \algrdMat;
\item $O(s_mm^3\log^2(m)+m^5\log(m))$
    for alg. \algrdDixon.
\end{enumerate}
\end{theorem}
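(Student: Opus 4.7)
My plan is to specialize theorem \ref{thm:com} to $A = H_m$ by computing the Hilbert-specific values of $\|A\|$, $\log D$, $\log\|\tilde{A}\|$, and the numerator and denominator of $\det(H_m)$, then substituting these into the four generic bounds. Since the entries of $H_m$ satisfy $i+j-1 \leq 2m-1$, immediately $\log\|A\| = O(\log m)$. The row denominator is $D_i = \lcm(i,\ldots,i+m-1)$, and combining the classical estimate $\log\lcm(1,\ldots,n) = \Theta(n)$ with the fact that any window of $m$ consecutive integers bounded by $2m$ contains $\Theta(m/\log m)$ primes of magnitude $\Theta(m)$, I would establish $\log D_i = \Theta(m)$. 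This yields $\log D = \sum_i \log D_i = \Theta(m^2)$, and since the entries of $\tilde{A} = H_m\diag(D_j)$ are at most $D_j$, also $\log\|\tilde{A}\| = O(m)$.

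Next I would bound the actual determinant. From the given closed form the numerator is $n=1$, and using $\binom{2k}{k} = \Theta(4^k/\sqrt{k})$ the denominator satisfies $\log d = \sum_{\ell=1}^{m-1}\bigl[\log(2\ell+1) + 2\log\binom{2\ell}{\ell}\bigr] = \Theta(m^2)$. Consequently the parameter $k = O(\max(\log n,\log d))$ of theorem \ref{thm:com} is $O(m^2)$, and $\log(Dn/d)$ is also $O(m^2)$.

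The four bounds then follow by direct substitution. For \algrdRat, $O(k(m^2\log\|A\| + m^\omega)) + O^\sim(k\sqrt{k})$ becomes $O(m^2(m^2\log m + m^\omega)) + O^\sim(m^3)$, which absorbs into the stated form. For \algrdDet, $\log(Dn/d) = O(m^2)$ combined with $\log\|A\| = O(\log m)$ gives the $O(m^{\omega+2}\log m)$ expression. For \algrdMat the same number of CRA iterations, combined with $\log\|\tilde{A}\| = \Theta(m)$, inflates the imaging cost by a factor of $m$, producing $O(m^5\log m)$. For \algrdDixon, the solution-size bound $x = O(m\log(m\|\tilde{A}\|)) = O(m^2)$ together with $\log\|\tilde{A}\| = O(m)$ feeds into the two-part formula of theorem \ref{thm:com}; combining this with the invariant-factor term of HybridDet yields the claimed $s_m m^3\log^2 m + m^5\log m$.

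The main technical obstacle is the lower bound $\log D_i = \Theta(m)$: although the asymptotic $\log\lcm(1,\ldots,n) = \Theta(n)$ is classical, verifying that every shifted window $\lcm(i,\ldots,i+m-1)$ with $i$ up to $m$ still has log of order $m$ requires a Chebyshev-type argument on the density of primes in short intervals. A secondary difficulty is the \algrdDixon estimate, whose $s_m$-dependent term reflects the Smith form of the preconditioned matrix; interpreting the HybridDet complexity here requires unpacking how its inner CRA loop and invariant-factor computations interact when $\tilde{A}$ has coefficients of bit-size $\Theta(m)$ rather than $O(\log m)$.
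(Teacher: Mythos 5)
Your proposal takes essentially the same route as the paper: specialize Theorem \ref{thm:com} to $H_m$ via the three Hilbert-specific sizes $\log\|H_m\|=O(\log m)$, $\log\|\tilde{H}_m\|$, and $\log(1/\det H_m)$, then substitute. Your intermediate estimates are in fact sharper than the ones the paper records ($\Theta(m^2)$ for $\log d$ and $O(m)$ for $\log\|\tilde{H}_m\|$ via the Chebyshev bound $\log\lcm(1,\dots,2m)=O(m)$, versus the paper's $O(m^2\log m)$ and $O(m\log m)$), and since only upper bounds enter the $O(\cdot)$ claims, the prime-density lower bound for $\log D_i$ that you flag as the main technical obstacle is not actually needed.
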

\begin{proof}
One should notice that $\log(\frac{1}{\det(H_m)})$ is $O(m^2\log(m))$. The size of entries of $H_m$ is $\log(\|H_m\|)=O(\log(m))$ and $\log(\|\tilde{H_m}\|)=O(m\log(m))$.
\end{proof}

In the case of Hilbert matrices algorithm \algrdDet~\-has the best time complexity and also performed best in the experiments, see section \ref{sec:exp}. Since the numerator is equal to $1$, we only have to recover the size of the over-approximation. Experimental results show, that its size is equal to about 8\% of the denominator size. Therefore, alg. \algrdDet, \algrdMat~ perform about 25 times less iterations than \algrdRat. As for the algorithm \algrdDixon, the study of the Smith form of $\tilde{H_m}$ has revealed that it is quite complex, with about $2\sqrt{m}$ nontrivial factors and the size $\log(s_m(\tilde{H_m}))$ equal $O(m)$. Thus, it is not worth computing \algrdDixon~ due to the high cost of the algorithm and poor gain.


\section{Performance comparison}\label{sec:exp}

In this section we present the experimental results for four strategies
from section~\ref{sec:rda}. We have tested the performance of four strategies on three
matrix sets: random, ill-conditioned and Hilbert matrices. 

We generated the random matrices using Matlab procedure {\em rand}. The entries of the matrices are decimal fractions with 6 decimal places chosen randomly from the interval $[0,1]$. The determinant  of the resulting matrices is large in the absolute value. The result of the numerical procedure of  Matlab is $\pm\infty$.

Ill-conditioned matrices have been chosen from the Matrix Market \cite{matrixMarket} Harwell-Boeing collection. We chose three sets: Grenoble, Astroph and Bcsstruc3.
Grenoble set represents the results of the simulation of computer systems.
The sizes of the matrices varies from 115 to 1107 and the condition numbers range from $1.5\cdot 10^2$ in the case of the smallest matrix to $9.7\cdot 10^7$ for the biggest.
The decimal precision of the entries depends on the matrix and ranges from 1 to 5 decimal places.
The determinants are close to 0.
For these matrices, Matlab procedure {\em det} computes the result correctly up to the $5$th decimal place.
Since matrix entries seem to be represented as rounded expansions of rational numbers, we computed the determinant of the matrices ''as is'' and then we took continued fractions approximants of the entries with the same precision as the decimal fractions.

Astroph set describes the process of nonlinear radiative transfer and statistical equilibrium in astrophysics.
The condition number is $3.6\cdot 10^{17}$ for the small $180\times 180 $ matrix and $1.7\cdot 10^{14}$ for the $765\times 765$ one.
The result of Matlab computation is $-\infty$.
Bcsstruc3 gives dynamic analyses in structural engineering.
All matrices are symmetric.
The condition number is about $10^{11}$ for matrices 19 and 20 and $10^{5}$ for matrix 22.
The result of Matlab computation is $\infty$.
%
%
%
%

We split the analysis of the performance of the algorithms in three phases. First, we will consider the cost of rational-modular vs. integer-modular imaging and compare it with the results for $\|A\|$ and $\|\tilde{A}\|$. 
Then we will take a look on the numerator and denominator approximations $D$ and $N$ computed by our algorithms. Finally, we give the timings for all strategies and compare their performance.

As we can see in table \ref{tab:him}, the time of computing an integer image can be several times shorter than for the rational image provided that the size of preconditioned matrix is still small. This is not the case for Hilbert matrices of dimension $\geq 250$ , when the time of rational image computation is better. Furthermore, for structured matrices, like Hilbert, we can reduce  the number of images computed. For a Hankel-type matrix, there are only $2n-1$ images to compute, which makes the cost of imaging negligible.

\begin{table}\centering
\begin{tiny}
\begin{tabular}{|@{}c@{}|@{}c@{}|@{}c@{}|@{}c@{}|@{}c@{}|@{}c@{}|}
\hline
A & RatIm & IntIm & IntIm/RatIm & $\log(\|A\|)$ & $\log(\|\tilde{A}\|)$\\\hline
bccstk817   &0.14587    &0.03126    &4.66696 & 60&66\\\hline
bccstk485   &0.05189    &0.01123    &4.61980 & 65&69\\\hline
bccstk138   &0.00280    &0.00050    &5.53681 & 42&42\\\hline
mmca180     &0.00808    &0.00120    &6.74795 & 77&76\\\hline
mccf765     &0.13222    &0.03215    &4.11219 & 70&68\\\hline
grenoble115 &0.00162    &0.00019    &8.51887 & 19&19\\\hline
grenoble185 &0.00746    &0.00096    &7.8125  & 19&19\\\hline
grenoble216a&0.01055    &0.00145    &7.25    & 2&1\\\hline
grenoble216b&0.0105     &0.00106    &9.90055 & 19&19\\\hline
grenoble343 &0.0264     &0.00507    &5.21053 & 2&1\\\hline
grenoble512 &0.0588     &0.0126     &4.66667 & 2&1\\\hline
grenoble1107&0.26762    &0.05682    &4.70958 & 16&16\\\hline
random200   &0.037      &0.003      &11.692  & 19&19\\\hline
random500   &0.330      &0.028      &11.831  & 19&19\\\hline
random800   &0.599      &0.071      &8.436   & 19&19\\\hline
random1000  &0.934      &0.111      &8.452   & 19&19\\\hline\hline
hilbert100  &0.00414    &0.00255    &1.62264 & 7&289\\\hline
hilbert200  &0.02174    &0.01984    &1.09552 & 8&567\\\hline
hilbert250 &0.03481    &0.03629    &0.95942 & 8&714\\\hline
hilbert300  &0.05093    &0.05967    &0.85350 & 9&847\\\hline
hilbert400  &0.09307    &0.13343    &0.69756 & 9&1134\\\hline
hilbert600  &0.21485    &0.41759    &0.51450 & 10&1711\\\hline
hilbert800  &0.38839    &0.94920    &0.40917 & 10&2294\\\hline
hilbert1000 &0.61425    &1.81285    &0.33883 & 10&2866\\\hline
\end{tabular}\caption{Comparison of the times (in seconds) for homomorphic imaging are given in columns RatIm (for rational) and IntIm (for integer). The ratio of the timings is given in column 3. Last two columns give the size of entries for $A$ and $\tilde{A}$. Matrix size is included in its name.}\label{tab:him}
\end{tiny}
\end{table}

The performance of the algorithms depends on the accuracy of denominator approximation used. For the bound $D=\pi D_i$, the resulting size of the over-approximation is shown in table \ref{tab:app}, column 4. In algorithm \algrdDixon~ we additionally approximate the numerator by computing $s_m(\tilde{A})$. In this case we are interested in the value $App(N)=s_m(\tilde{A})$ and $\frac{D}{d}\frac{n}{s_m(\tilde{A})}$ which we compute instead of the numerator. As we can see in the table, the quality of the approximation of the denominator depends on the matrix and ranges from 1-2\% in the case of sparse matrices in the Grenoble set, to 80\% for Bccstk matrices. For Hilbert matrices the approximation is quite efficient, the over-approximation is always less than 10\%. Table \ref{tab:res} shows that despite the size of the over-approximation, preconditioning allow us to gain enough to beat the naive \algrdRat~ algorithm. If the size of $\|\tilde{A}\|$ is small, as is the case for sparse matrices, we can compute $s_m(\tilde{A})$ at a  relatively low cost and efficiently approximate the numerator.



\begin{table}\centering
\begin{tiny}
\begin{tabular}{|@{}c@{}|@{}c@{}|@{}c@{}|@{}c@{}|@{}c@{}|@{}c@{}|@{}c@{}|}\hline
$A$ & $\log(d)$ &$\log(n)$ &$\log(D/d)$& $\frac{\log(D/d)}{d}$ &$\log(App(n))$ & $\log(\frac{Dn}{dApp(N)})$\\\hline
bccstk817   &7845   &36169  &6294   &0.802  &25923  &16540\\\hline
bccstk485   &3903   &21921  &2538   &0.650  &16225  &8234\\\hline
bccstk138   &2576   &5040   &139    &0.054  &3880   &299\\\hline
mmca180        &1663   &7341   &571    &0.343  &7375   &537\\\hline
mccf765        &5503   &32451  &2626   &0.477  &32483  &2594\\\hline
grenoble115 &2243   &2136   &36     &0.016  &1526   &646\\\hline
grenoble185 &3072   &2785   &3      &0.001  &2777   &11\\\hline
grenoble216a&423    &131    &9      &0.021  &124    &16\\\hline
grenoble216b&4110   &3278   &193    &0.047  &683    &2788\\\hline
grenoble343 &678    &209    &8      &0.012  &201    &16\\\hline
grenoble512 &1009   &303    &15     &0.015  &306    &12\\\hline
grenoble1107&15639  &14002  &2707   &0.173  &7184   &9525\\\hline
random200   &3986   &4255   &0      &0      &4255   &0\\\hline
random500   &9961   &10952  &4      &0      &10956  &0\\\hline
random800   &15944  &17797  &1      &0      &17798  &0\\\hline
random1000  &19931  &22407  &0      &0      &22404  &3\\\hline\hline
hilbert100  &19737  &1      &1690   &0.086  &130    &1561\\\hline
hilbert200  &79472  &1      &6493   &0.082  &290    &6204\\\hline
hilbert300  &179207 &1      &14323  &0.080  &424    &13900\\\hline
hilbert400  &318942 &1      &26509  &0.083  &563    &25947\\\hline
hilbert600  &718412 &1      &59948  &0.083  &848    &59101\\\hline
hilbert800  &1277881&1      &103581 &0.081  &1133   &102449\\\hline
hilbert1000 &1997351&1      &164550 &0.082  &1424   &163127\\\hline
\end{tabular}\caption{The size of the numerator $n$ and denominator $d$ of $\det(A)$, the size of the denominator over-approximation $D/d$ computed by \algrdDet~ and \algrdMat; the numerator approximation $App(n)$ obtained as $s_m$ in \algrdDixon, and the size of the part remaining to compute. $s_m$ depends on $n$ and the over-approximation $D/d$.}\label{tab:app}
\end{tiny}
\end{table}
\begin{table}\centering
\begin{tiny}
\begin{tabular}{|@{}c@{}|@{}c@{}|@{}c@{}|@{}c@{}|@{}c@{}|}
\hline
Matrix  &\algrdRat  &\algrdDet  &\algrdMat  &\algrdDixon\\\hline
bccstk817   &*           &789.02    &553.624&{\bf 318.62}\\\hline
bccstk485   &278.964    &143.888&95.836 &{\bf 57.144}\\\hline
bccstk138   &4.12       &1.868  &1.324  &{\bf 0.764}\\\hline
mmca180    &14.404 &5.896  &3.644  &{\bf 1.604}\\\hline
mccf765    &*  &585.724    &416.352    &{\bf 128.24}\\\hline
grenoble115 &1.444  &0.591813   &0.456  &{\bf 0.288}\\\hline
grenoble185 &5.86   &2.34   &1.456  &{\bf 0.468}\\\hline
grenoble216a    &1.052  &0.268  &{\bf 0.248}    &0.26\\\hline
grenoble216b    &10.448 &3.852  &{\bf 2.204}    &2.128\\\hline
grenoble343 &4.292  &0.924  &0.832  &{\bf 0.732}\\\hline
grenoble512 &14.844 &2.868  &2.48   &{\bf 1.072}\\\hline
grenoble1107    &*  &698.436    &519.368    &{\bf 367.448}\\\hline
random200   &24.096     &10.776  &3.996     &{\bf 2.980}\\\hline
random500   &432.448    &180.448 &71.492    &{\bf 54.996}\\\hline
random800   &1715.316   &789.154 &331.008   &{\bf 205.188}\\\hline
random1000  &*          &1572.024&662.956   &{\bf 403.232}\\\hline\hline
hilbert100  &17.860 &0.664  &{\bf 0.548}    &0.712\\\hline
hilbert200  &330.280    &11.104 &{\bf 10.52}    &11.312\\\hline
hilbert300  &*  &{\bf 59.144}   &65.236 &66.872\\\hline
hilbert400  &*  &{\bf 200.844}  &252.676    &265.276\\\hline
hilbert600  &*  &{\bf 1072.754} &1664.738   &1735.574\\\hline
hilbert800  &*  &{\bf 3476.188} &6299.98    &8830.372\\\hline
hilbert1000 &*  &{\bf 8870.534} &18466.348  & 19328.66
\\\hline
\end{tabular}\caption{Timing comparison for 4 rational determinant
strategies. All times in seconds. Best times in bold.}\label{tab:res}
\end{tiny}
\end{table}
The timings for all algorithms are shown in table \ref{tab:res}. The
results for Hilbert matrices agree with the complexity estimation in
Thm. \ref{thm:hil}. Note that alg. \algrdDixon~ is usually the best
for the matrices from MatrixMarket collection. 

For the Grenoble set, the approximation by continued fractions allowed quite well, in our opinion, to reconstruct the orginal rational matrix connected to the problem. Despite the difference in properties, the running times for the decimal and continued fractions variants were simmilar. However, although the matrices were close in the maximum norm, the determinants ratio reached as much as 2 in the case of grenoble1107.   

In figure \ref{fig:hil} we present the results of the determinant computation for Hilbert matrices. We compare the timings for algorithm \algrdRat,  \algrdDet, \algrdMat, \algrdDixon, and the Maple LinearAlgebra::Determinant algorithm with {\em method=rational}. The best performance is observed for a variant of algorithm \algrdDet~ which takes into account the Hankel structure of the matrix.

\begin{figure}\label{fig:hil}
\centering
\includegraphics[width=200pt]{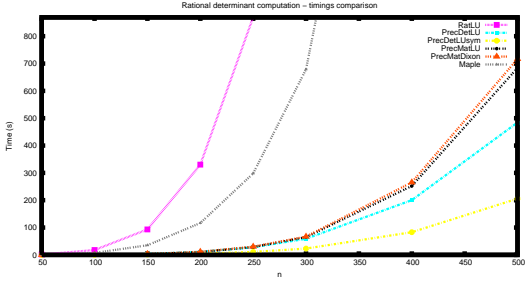}
\caption{Comparison of the timings for the exact computation of the
rational determinant of Hilbert matrices. The results for algorithms
\algrdRat, \algrdDet, \algrdMat~ and \algrdDixon~ implemented in
LinBox and Maple Determinant procedure are shown. Algorithm
\algrdDet~ is used in the classic and symmetric variant, which takes
into account the Hankel structure of the matrix. All times in
seconds. }
\end{figure}




\section{Conclusions}\label{sec:ccl}

It this paper we have presented four strategies for exact computation of the determinant of a rational matrix. We have evaluated the performance of these algorithms on several sets of matrices. The performance of the algorithms suggests that there exists a clear division between the matrices given as a rational approximation (by decimal fractions) of real valued matrices and the matrices with a great diversity of the denominators of the entries. For the first case, matrix preconditioning which leads to a integer matrix is proposed, which allows us to use integer determinant algorithms, see solution \algrdDixon. For the second case, determinant preconditioning is preferred, which does not lead to matrix coefficient blow-up. In general, preconditioning proved more useful than rational reconstruction tools, although better early termination methods where the modulus $M$ is linear in the size of the output $n$ and $d$ can bring a change, see \cite{kho-mon:2006,ol-sto:2006}.

An adaptive solution should 
be able to choose the best storage method and homomorphic imaging scheme,
and work independently of the determinant over-approximation.

We propose the following solution, which incorporates the elements of all algorithms 
\begin{enumerate}
\item Compute $D=\pi D_i$, $\tilde{A}$; set $N=1$;
\item If $\log_p(\|\tilde{A}\| < C)$ compute $N=s_m(\|\tilde{A}\|)$ - see alg. \algrdDixon
\item Compute the modular image of the rational matrix $A$ and integer matrix $\tilde{A}$, determine whether to use \algrdDet~ or \algrdMat~ based on the timings.
\item Run the ET CRA loop for $\frac{D}{N}\cdot \det(A)$ using \algrdDet~ or \algrdMat.
\item From time to time check by rational reconstruction the early termination condition on $\det(A)$ - see \algrdRat.
\end{enumerate}
This algorithm can be further developed to compute other invariant factors as in alg. \algrdDixon~ if relevant. Notice, that the cost of introducing solution \algrdRat~ to the adaptive algorithm is virtually that of rational reconstruction.

Further work can include intertwining algorithms \algrdRat~ and \algrdDet~ to include the use of less exact determinant preconditioners, which potentially are not a multiple of $d$. The aim would to reduce a factor of the denominator by preconditioning and reconstruct the remaining part by rational reconstruction. The strategy should be effective, if the over-approximation caused by preconditioning is reduced but a large fraction of the denominator is obtained at the same time. 
 For example, $D=\pi D_i /gcd(D_i)$ could be considered.


Further work can then focus on the implementation of the solution in the case of sparse matrices and on the parallelization of the algorithms.

In this paper we have considered the case of dense matrices in the analysis of the complexity of the strategies as well as in the implementation. However, sparse matrix counterparts of the algorithms can also be used. For the modular determinant computation one could used the algorithm of Wiedemann \cite{Wiedemann:1986:SSLE} that computes the determinant by finding the characteristic polynomial of the matrix. In alg. \algrdDixon~ the sparse solver of \cite{EbGies2006} can be used.

The strategies described in this paper contain elements that allow parallelization. This concerns in particular the CRA loop, where several iterations can be performed at the same time, see \cite{Dumas2000}. The question of an optimally distributed early termination in the case of integer Chinese reconstruction (alg. \algrdDet, \algrdMat, \algrdDixon) as well as the rational reconstruction (alg. \algrdRat) has not yet been addressed. For a parallel $p$-adic lifting for alg. \algrdDixon, see \cite{Dumas2002}.

In this paper we have developed and compared four strategies to compute the rational determinant of a matrix. We have proposed two preconditioning methods that allow us to transfer the problem from rational to integer domain. We believe that the approach described in this article can also be applied in other problems of exact computation in rational numbers such as rank computation or system solving. 

\end{document}